\newtheorem{theorem}{Theorem}
\newtheorem{definition}{Definition}
\newtheorem{remark}{Remark}
\newtheorem{example}{Example}
\newtheorem{Corollary}{Corollary}
\def\BState{\State\hskip-\ALG@thistlm}
\def\defop{\mathrel{\ooalign{%
\raisebox{1.2\height}{\scriptsize3}\cr\hidewidth$\neq$\hidewidth\cr}}}
\def\defopp{\mathrel{\ooalign{%
\raisebox{2\height}{\scriptsize p}\cr\hidewidth$\neq$\hidewidth\cr}}}
\begin{document}

\title{\vspace{0.25in}Multi-Dimensional Spatially-Coupled Code Design Through {Informed} Relocation of Circulants\vspace{-0.1cm}}

\author{\IEEEauthorblockN{Homa Esfahanizadeh, Ahmed Hareedy, and Lara Dolecek}
\IEEEauthorblockA{Department of Electrical and Computer Engineering, University of California, Los Angeles, USA\\
hesfahanizadeh@ucla.edu, ahareedy@ucla.edu, and dolecek@ee.ucla.edu}\vspace{-0.5cm}}

\maketitle

\begin{abstract}
A circulant-based spatially-coupled (SC) code is constructed by partitioning the circulants of an underlying block code into a number of components, and then {coupling copies of these components} together. By connecting (coupling) several SC codes, multi-dimensional SC (MD-SC) codes are constructed. In this paper, we present a systematic framework for constructing MD-SC codes with notably better girth properties than their 1D-SC counterparts. In our framework, informed multi-dimensional coupling is performed {via an optimal relocation and an (optional) power adjustment of problematic circulants in the constituent SC codes}. Compared to the 1D-SC codes, our MD-SC codes are demonstrated to have {up to 85\%} reduction in the population of the smallest cycle, and {up to 3.8 orders of magnitude BER improvement} in the early error floor region. {The results of this work can be particularly beneficial in data storage systems, e.g., 2D magnetic recording and 3D Flash systems, as high-performance MD-SC codes are robust against various channel impairments and non-uniformity.} 
\vspace{-0.0cm}
\end{abstract}

\IEEEpeerreviewmaketitle

\section{Introduction}

Spatially-coupled (SC) codes are a family of graph-based codes that have attracted significant attention thanks to their capacity approaching performance. SC codes are constructed by coupling together a series of disjoint block codes into a single coupled chain \cite{FelstromIT1999}. Here, we use circulant-based (CB) LDPC codes, \cite{TannerIT2004}, as the underlying block codes. Multi-dimensional SC (MD-SC) codes are constructed by coupling several SC codes together via rewiring the existing connections or by adding extra variable nodes (VNs) or check nodes (CNs). MD-SC codes are more robust against burst erasures and channel non-uniformity, and they have improved iterative decoding thresholds, compared to 1D-SC codes {\cite{TruhachevITA2012,OhashiISIT2013}}.

In \cite{TruhachevITA2012}, {a construction method is presented for MD-SC codes that have specific structures, e.g., {loops and triangles}.}
The construction method for MD-SC codes presented in \cite{OhashiISIT2013} involves connecting edges uniformly at random such that some criteria on the number of connections are satisfied.
In \cite{SchmalenISTC2014}, a framework is presented for constructing MD-SC codes by randomly and sparsely introducing additional VNs {to connect CNs at the same positions of different chains}. 
In \cite{LiuCOMML2015}, multiple SC codes are connected by edge exchange between adjacent chains to improve the iterative decoding threshold. 
{The previous works on MD-SC codes, while promising, have some limitations. In particular, they either consider random constructions or are limited to specific topologies. They also use the density evolution technique for the performance analysis. This technique is dedicated to the asymptotic regime and is based on some assumptions that can not be readily translated to the practical finite-length case.} In \cite{OlmosITW2013}, a finite-length analysis in the waterfall region for MD-SC codes with a loop structure is presented.


Finding the best connections to be rewired in order to connect constituent 1D-SC codes and construct MD-SC codes with high finite-length performance is still an open problem. {This is the first paper to present a systematic framework for constructing MD-SC codes by optimally coupling individual SC codes together to attain {fewer} short cycles.} For connecting the constituent SC codes, we do not add extra VNs or CNs, and we only rewire some existing connections. 

For exchanging the connections, we follow three rules: (1) {The connections involved in the highest number of {certain} cycles are targeted for rewiring{;}} (2) {The neighboring {constituent 1D-}SC code to which the targeted connections are rewired is chosen such that the minimum number of {the certain} cycles is attained{;}} (3) The targeted connections are rewired to the same positions in the other {constituent} 1D-SC codes in order to preserve the low-latency decoding property. From an algebraic viewpoint, problematic circulants {(which correspond to groups of connections)} that contribute to the highest number of {certain} cycles in the {constituent 1D-SC} codes are relocated {to connect these codes {together}.} Finally, the powers of relocated circulants are (optionally) adjusted to further improve {girth properties}.


\section{Preliminaries}

Throughout this paper, each column (resp., row) in a parity-check matrix corresponds to a VN (resp., CN) in the equivalent graph of the matrix. CB codes are regular $(\gamma,\kappa)$ LDPC codes, where $\gamma$ is the column weight of the parity-check matrix (VN degree), and $\kappa$ is the row weight (CN degree). The parity-check matrix $\bold{H}$ of a CB code consists of $\kappa\gamma$ circulants. Each circulant is of the form of $\sigma^{f_{i,j}}$ where $i$, $0\leq i\leq \gamma {-}1$, is the row group index, $j$, $0\leq j\leq \kappa {-} 1$, is the column group index, and $\sigma$ is the $z\times z$ identity matrix cyclically shifted one unit to the left (a circulant permutation matrix). In this paper, we use CB codes as the underlying block codes to construct SC codes

The parity-check matrix $\bold{H}_\text{SC}$ of a CB SC code is constructed by partitioning the $\kappa\gamma$ circulants of the underlying block code into ($m+1$) component matrices $\bold{H}_0,\bold{H}_1,\dots,\bold{H}_m$ (with the same size as $\bold{H}$), and piecing $L$ copies of the component matrices together as shown in Fig.~1. The parameters $m$ and $L$ are called the memory and the coupling length, respectively. Each component matrix $\bold{H}_l$, $0\leq l \leq m$, has a subset of circulants of $\bold{H}$ and {zeros} elsewhere {so that} $\sum_{l=0}^{m}\bold{H}_l=\bold{H}$. A replica $\bold{R}_d$, $1\leq d \leq L$, is a submatrix of $\bold{H}_\text{SC}$ that has one non-zero submatrix $[\bold{H}_0^T\dots\bold{H}_m^T]^T$, see Fig.~1. Recently, a systematic framework for partitioning the underlying block code and optimizing the circulant powers, known as the OO-CPO technique, was {proposed} for constructing high-performance SC codes \cite{1802_06481}. In this paper, we use the OO-CPO technique for designing the constituent SC codes that are {then} used to construct MD-SC codes. 

{Short cycles have a negative impact on the performance of LDPC codes under iterative decoding}. These cycles affect the independence of the extrinsic information exchanged in the iterative decoder. Moreover, problematic combinatorial objects that cause the error-floor phenomenon, e.g., absorbing sets and trapping sets, are formed of cycles with relatively short lengths in the graph of a code \cite{Richardson2003,DolecekIT2010}. {We present a systematic framework to construct MD-SC codes, which is based on an optimal relocation of circulants. MD-SC codes constructed using our proposed framework enjoy notably lower population of short cycles, and consequently better performance compared to {1D-SC codes.}}

\begin{figure}
\centering
\includegraphics[width=0.26\textwidth]{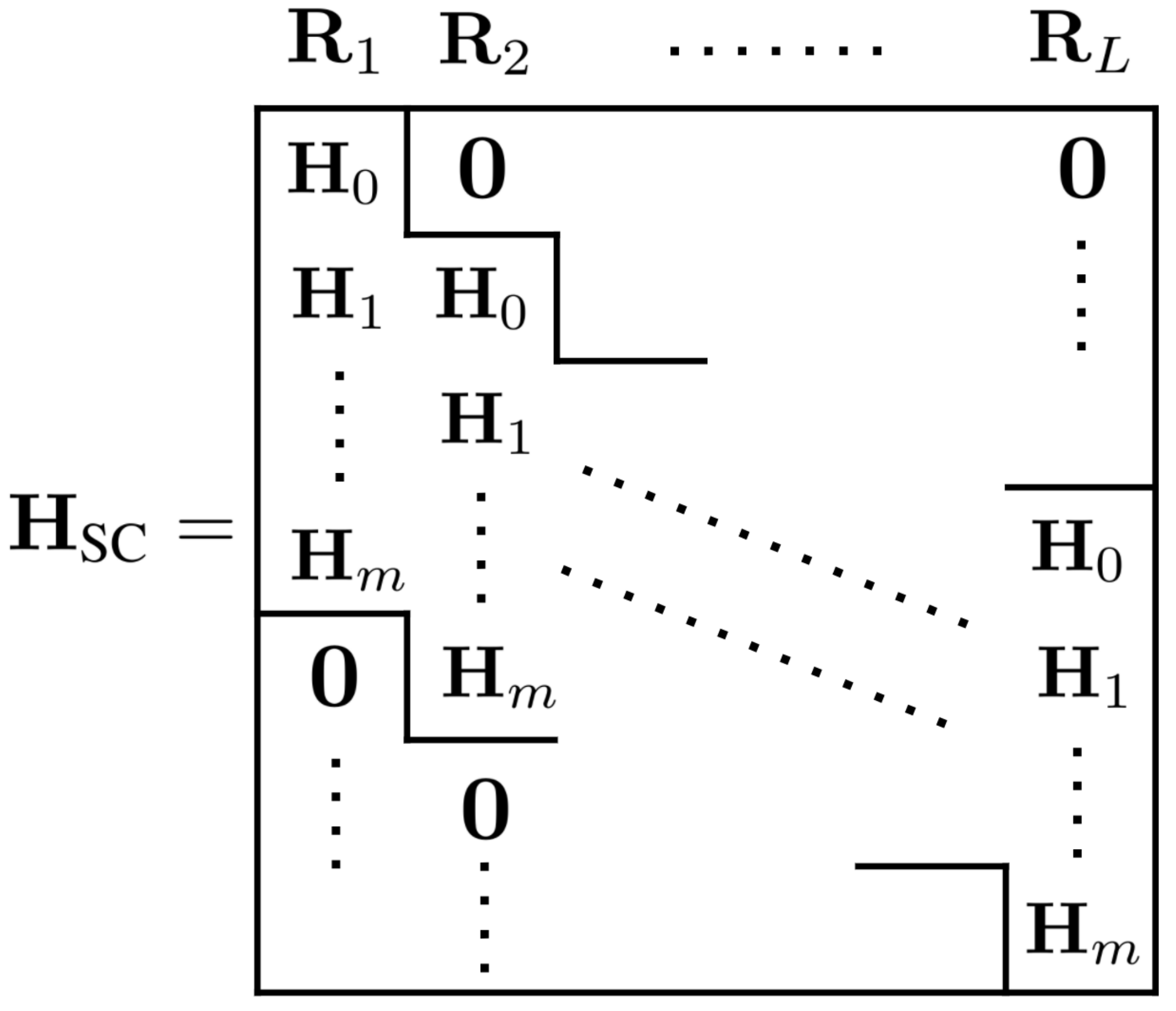}\vspace{-0.0cm}
\caption{The parity-check matrix of an SC code with parameters $m$ and $L$.\vspace{-0.5cm}}
\end{figure}

\section{Novel Framework for MD-SC Code Design}
First, we present our system model for MD-SC codes. Next, we investigate the effects of relocating a subset of circulants on the population of cycles. Finally, we present our algorithm for constructing MD-SC codes based on a majority voting policy.
\subsection{MD-SC Code Structure}
Consider three instances of an SC code with {parity-check matrix} $\bold{H}_\text{SC}$, memory $m$, and coupling length $L$. Consider the middle replica {$\bold{R}_d$} in $\bold{H}_\text{SC}${, where $d=\lceil L/2 \rceil$}. There are $\kappa\gamma$ non-zero circulants in this replica. Out of these $\kappa\gamma$ circulants, we choose $\mathcal{T}$ circulants that are the most problematic, i.e., that contribute to the highest number of cycles-$k$, where $k$ is the girth of the SC code. {We relocate the chosen circulants to two auxiliary matrices, $\bold{P}$ and $\bold{Q}$, such that a relocated circulant from $\bold{H}_{\text{SC}}$ is moved to the same position in either $\bold{P}$ or $\bold{Q}$}. The same relocations are repeated for all the ($L-1$) remaining replicas.
{We note that the middle replica $\bold{R}_d$ is considered for ranking the circulants in order to include all possible cycles-$k$ that a non-zero circulant in $\bold{H}_\text{SC}$ can contribute to.}
Initially, $\bold{P}$ and $\bold{Q}$ are set to zero. These auxiliary matrices have the same dimensions as $\bold{H}_\text{SC}$, and
\begin{equation}
\bold{H}_\text{SC}=\bold{H}_\text{SC}'+\bold{P}+\bold{Q},
\end{equation}
where $\bold{H}_\text{SC}'$ is derived from $\bold{H}_\text{SC}$ by removing the $\mathcal{T}$ chosen circulants.
{The parity-check matrix of the MD-SC code, $\bold{H}_\text{SC}^\text{MD}$, is then constructed as follows:}
\begin{equation}
\bold{H}_\text{SC}^\text{MD}=\left[\begin{array}{lll}
\bold{H}_\text{SC}'&\bold{Q}&\bold{P}\\
\bold{P}&\bold{H}_\text{SC}'&\bold{Q}\\
\bold{Q}&\bold{P}&\bold{H}_\text{SC}'
\end{array}\right].
\end{equation}

Example~1 shows the graphical illustration of an MD-SC code having the presented structure.
\begin{example}
Consider an SC code with $\kappa=3$, $\gamma=2$, $z=3$, $m=1$, and $L=3$. {The matrix $\bold{H}$ of the underlying block code and the component matrices are given below:
\begin{equation*}
\bold{H}=\left[\begin{array}{ccc}
\sigma^{f_{0,0}}&\sigma^{f_{0,1}}&\sigma^{f_{0,2}}\\
\sigma^{f_{1,0}}&\sigma^{f_{1,1}}&\sigma^{f_{1,2}}
\end{array}\right],\\
\end{equation*}
\begin{equation*}
\bold{H}_0\hspace{-0.11cm}=\hspace{-0.12cm}\left[\begin{array}{ccc}
\sigma^{f_{0,0}}&\bold{0}&\sigma^{f_{0,2}}\\
\bold{0}&\sigma^{f_{1,1}}&\bold{0}
\end{array}\right]\hspace{-0.1cm},
\bold{H}_1\hspace{-0.11cm}=\hspace{-0.12cm}\left[\begin{array}{ccc}
\bold{0}&\sigma^{f_{0,1}}&\bold{0}\\
\sigma^{f_{1,0}}&\bold{0}&\sigma^{f_{1,2}}
\end{array}\right]\hspace{-0.1cm}.
\end{equation*}
}Three instances of the code along with problematic connections are depicted in Fig.~2(a). The problematic connections of an SC code are rewired to the same positions in another SC code to construct an MD-SC code with $\mathcal{T}=1$, as shown in Fig.~2(b). {
Here, the $L$ instances of circulant $\sigma^{f_{1,0}}$ in each $\bold{H}_\textnormal{SC}$ are relocated to construct the MD-SC code.
}
\begin{figure}
\centering
\begin{tabular}{c}
\includegraphics[width=0.4\textwidth]{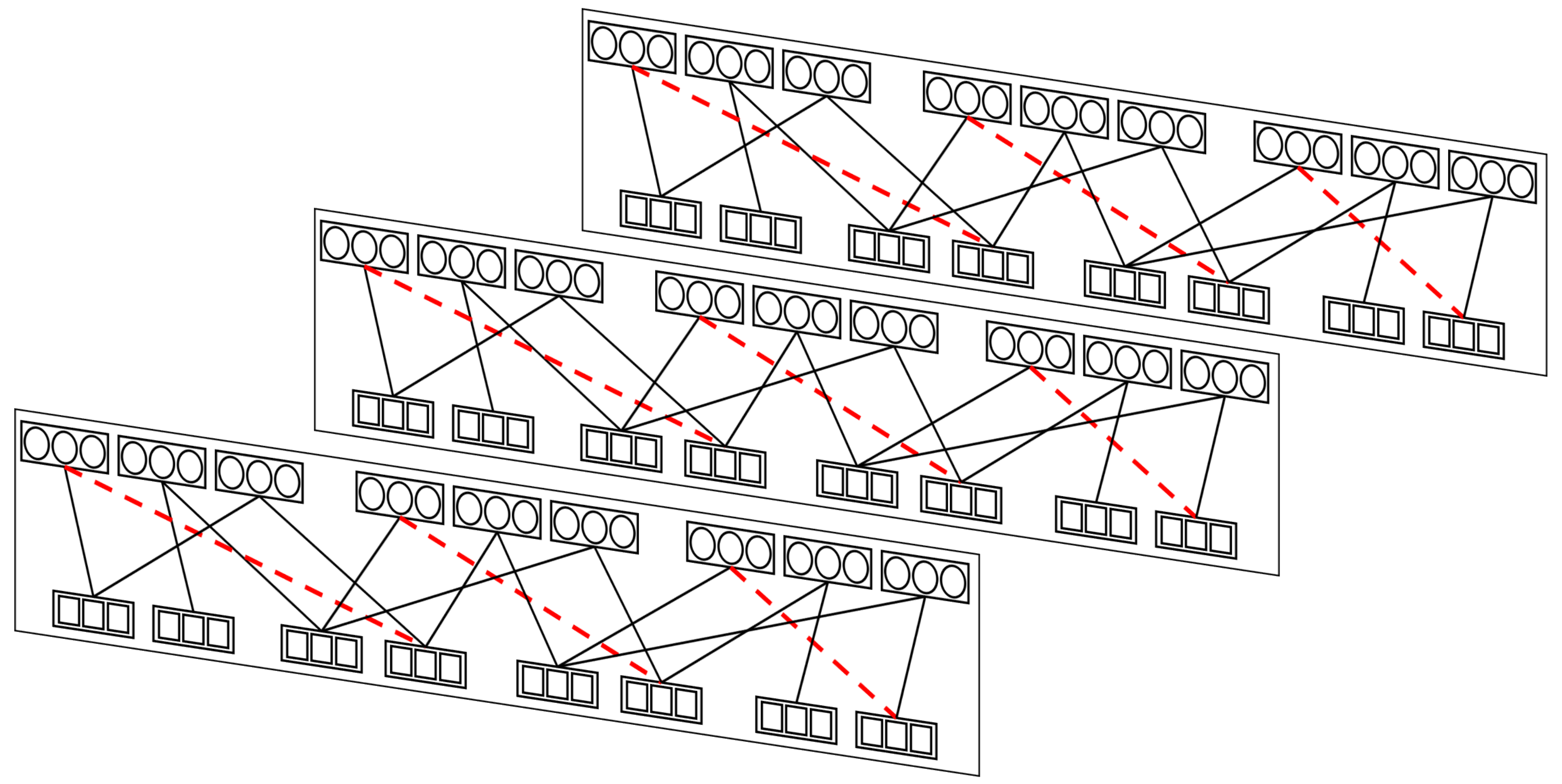}\\
(a)\vspace{-0.1cm}\\
\includegraphics[width=0.4\textwidth]{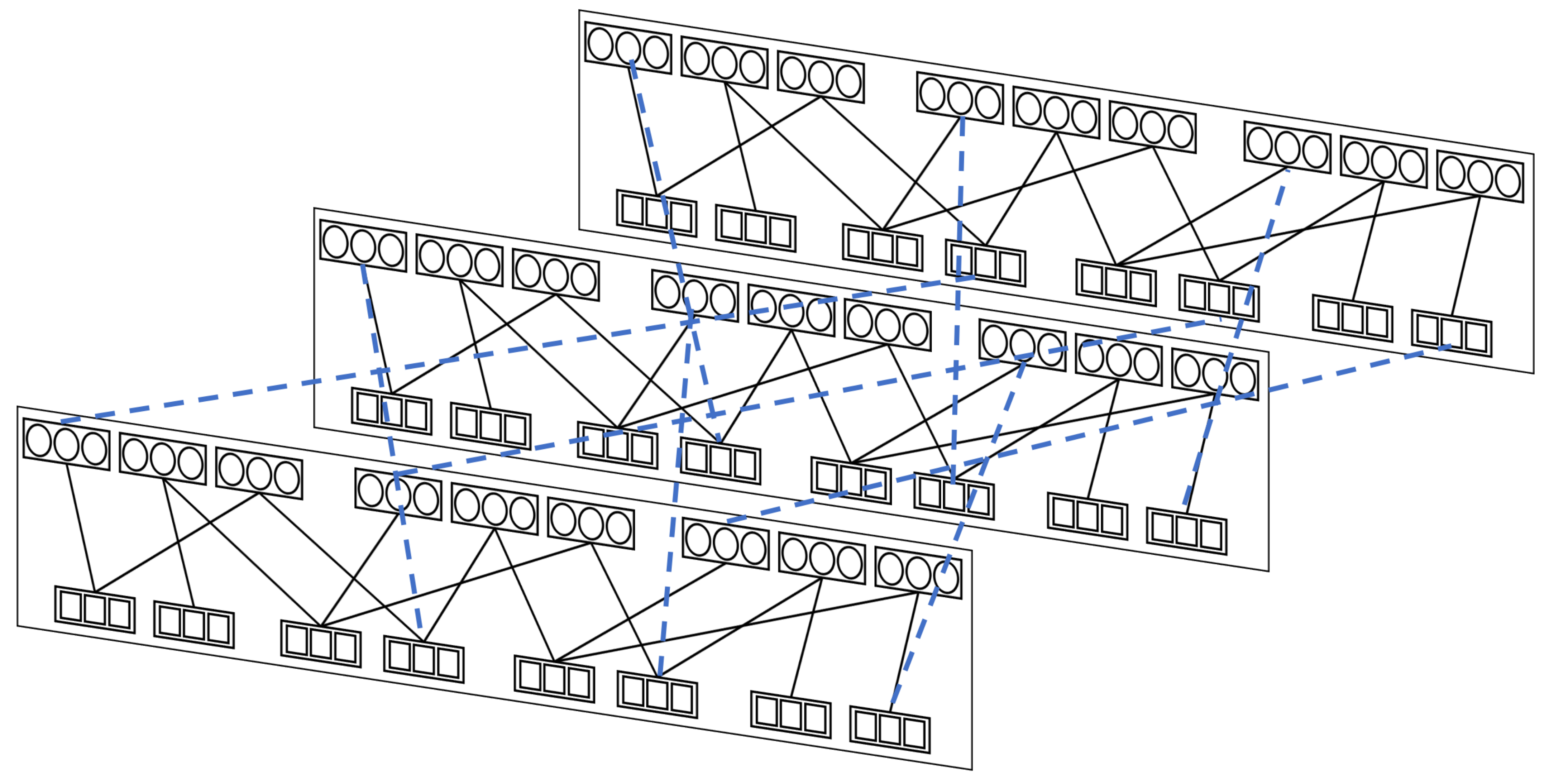}\\
(b)
\end{tabular}\vspace{-0.1cm}
\caption{(a) Three {1D-SC} codes. Circles (resp., squares) represent VNs (resp., CNs). Each line represents a group of connections (defined by a circulant) from $z$ VNs to $z$ CNs. Problematic connections are shown in {dashed} red lines. (b) MD-SC code. Rewired connections are shown in dashed blue lines.\vspace{-0.5cm}}
\end{figure}
\end{example}
\begin{definition}\text{ }
\begin{enumerate}
\item Let $\mathcal{C}_{i,j}$, where $0\leq i\leq (L+m)\gamma {-}1$ and $0\leq j\leq L\kappa {-} 1$, be a non-zero circulant in $\bold{H}_\textnormal{SC}$. We say $\mathcal{C}_{i,j}$ is relocated to $\bold{P}$ (resp., $\bold{Q}$) if it is moved from $\bold{H}_\textnormal{SC}$ to $\bold{P}$ (resp., $\bold{Q}$). We denote this relocation as $\mathcal{C}_{i,j}{\rightarrow}\bold{P}$ (resp., $\mathcal{C}_{i,j}{\rightarrow}\bold{Q}$).
\item {The operator $\overset{p}{=}$ (resp., $\defopp$\hspace{0.05cm}) defines the congruence (resp., incongruence) modulo $p$, and the operator $(.)_{p}$ defines modulo $p$ of an integer.}
\item The MD mapping $M:\{\mathcal{C}_{i,j}\}{\rightarrow}\{0,1,2\}$ is a mapping from a non-zero circulant in $\bold{H}_\textnormal{SC}$ to an integer in $\{0,1,2\}$, and it is defined as follows:
\begin{enumerate}
\item If $\mathcal{C}_{i,j}{\rightarrow}\bold{P}$, $M(\mathcal{C}_{i,j})=1$.
\item If $\mathcal{C}_{i,j}{\rightarrow}\bold{Q}$, $M(\mathcal{C}_{i,j})=2$. 
\item {If $\mathcal{C}_{i,j}$ is kept in $\bold{H}_\textnormal{SC}'$ (no relocation), $M(\mathcal{C}_{i,j})=0$.}
\end{enumerate}
\item A cycle-$k$, or $\mathcal{O}_k$, visits $k$ circulants in the parity-check matrix of the code, see Fig.~3. We list the $k$ circulants of $\mathcal{O}_k$, according to the order they are visited when the cycle is traversed in a clockwise direction, in a sequence as $C_{\mathcal{O}_k}=\{\mathcal{C}_{i_1,j_1},\mathcal{C}_{i_2,j_2},\dots,\mathcal{C}_{i_k,j_k}\}$, where $i_1=i_2,j_2=j_3,\dots,i_{k-1}=i_k,j_k=j_1$. A circulant can be visited more than once by $\mathcal{O}_k$, e.g., see Fig.~3(b).
\item We denote the distance between two circulants $\mathcal{C}_{i_u,j_u}$ and $\mathcal{C}_{i_v,j_v}$ on a cycle $\mathcal{O}_k$, where $u,v\in\{1,\dots,k\}$, as $D_{\mathcal{O}_k}(\mathcal{C}_{i_u,j_u},\mathcal{C}_{i_v,j_v})\in\{0,\dots,k-1\}$.
By definition, $D_{\mathcal{O}_k}(\mathcal{C}_{i_u,j_u},\mathcal{C}_{i_v,j_v})=|v-u|$.
\end{enumerate}
\end{definition}
Because of the structure of MD-SC codes, when a non-zero circulant in one replica of $\bold{H}_\textnormal{SC}$ is relocated, the same relocation is applied to the ($L-1$) other replicas as well. Thus,
\begin{equation}
{M(\mathcal{C}_{i,j})=M(\mathcal{C}_{i-\rho \gamma,j-\rho \kappa}), \text{ where } \rho=\lfloor j/\kappa \rfloor.}
\end{equation}

\begin{figure}
\centering
\begin{tabular}{m{4cm}m{4cm}}
\includegraphics[width=0.17\textwidth]{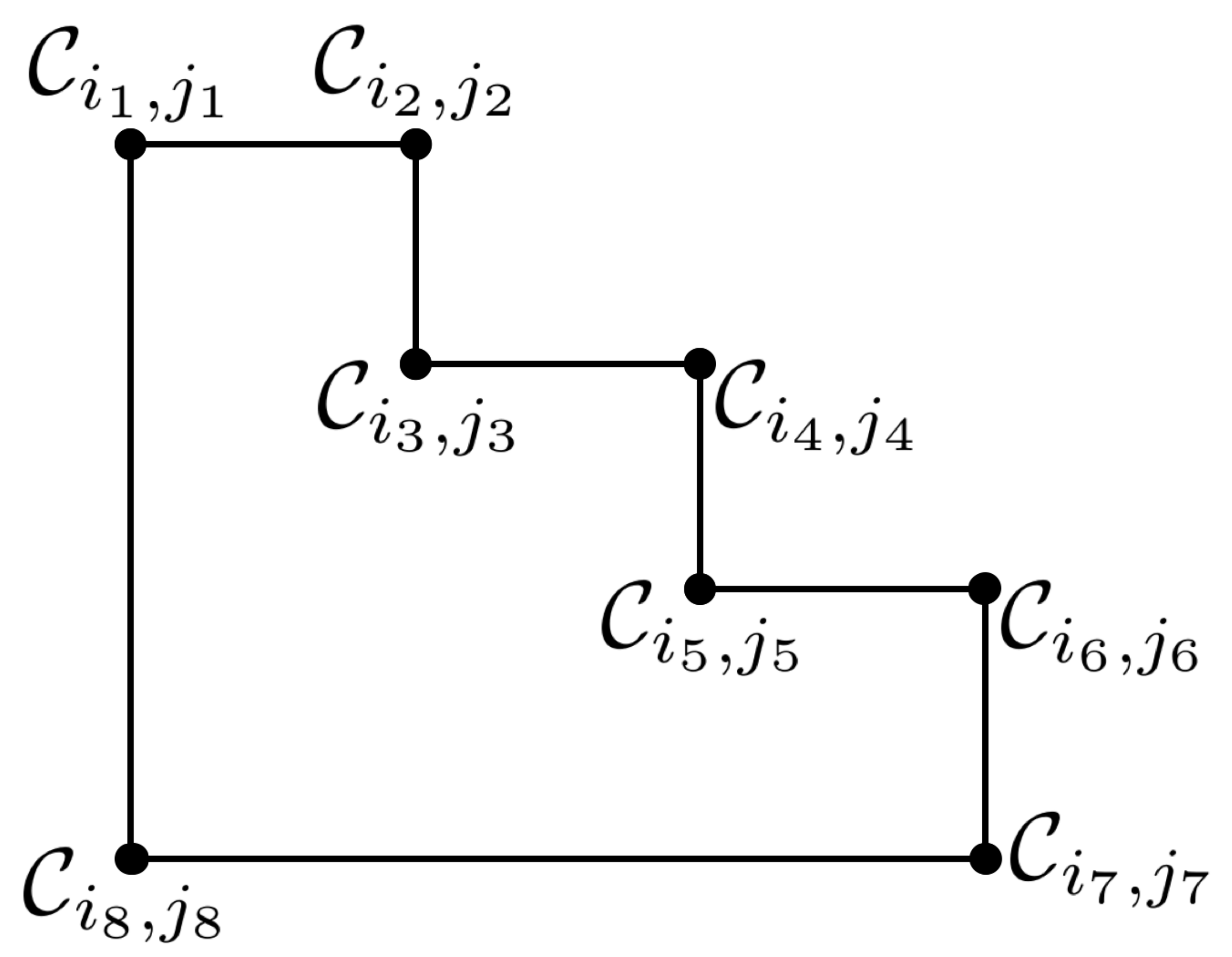}&
\includegraphics[width=0.17\textwidth]{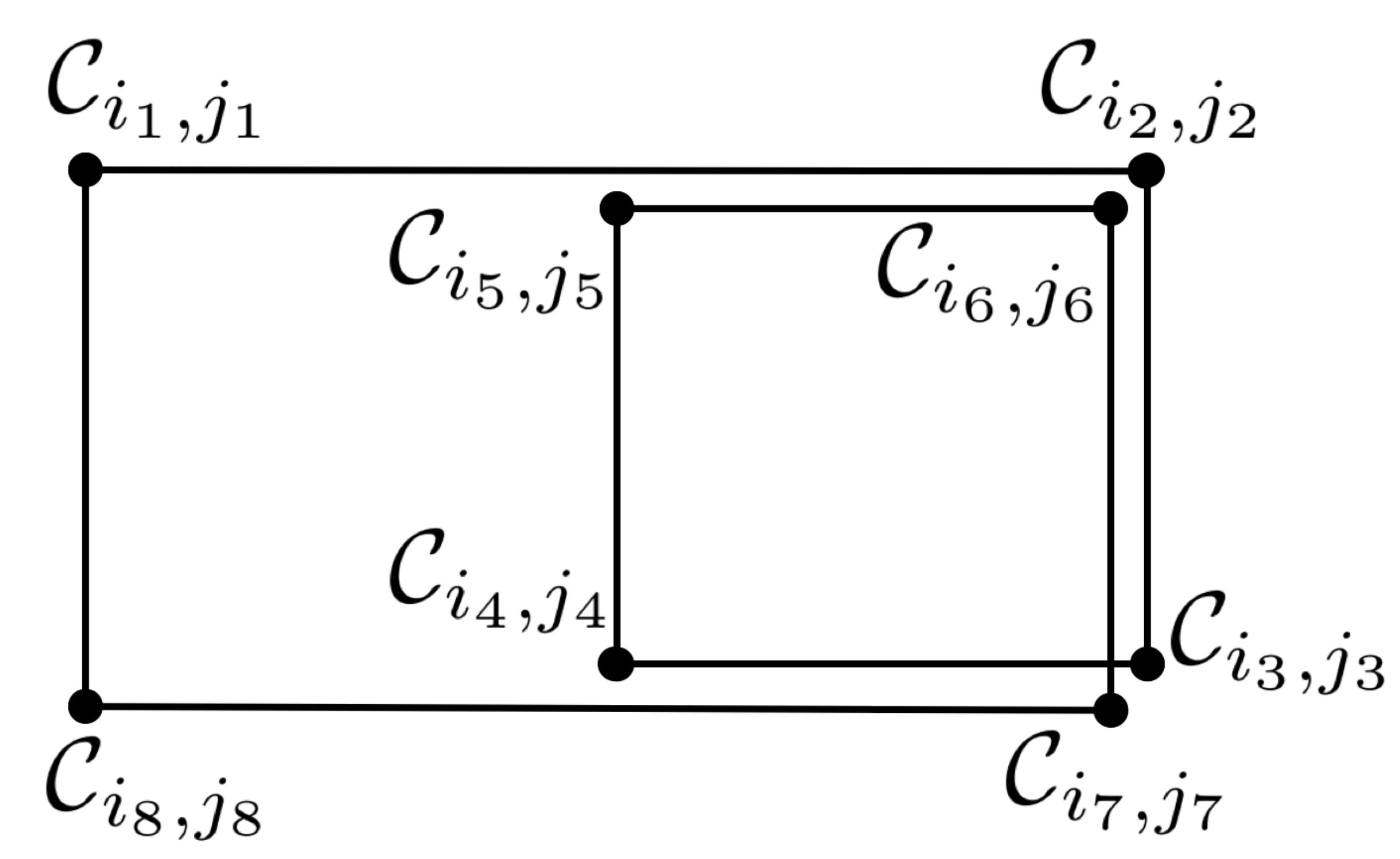}\vspace{-0.5cm}\\
\hspace{1.6cm}(a)&\hspace{1.6cm}(b)
\end{tabular}\vspace{-0.1cm}
\caption{Cycles-$8$ with $C_{\mathcal{O}_8}=\{\mathcal{C}_{i_1,j_1},\dots,\mathcal{C}_{i_8,j_8}\}$. Each line represents a connection between two circulants. (a) All circulants are unique. (b) $\mathcal{C}_{i_6,j_6}=\mathcal{C}_{i_2,j_2}$ and $\mathcal{C}_{i_7,j_7}=\mathcal{C}_{i_3,j_3}$.\vspace{-0.5cm}}
\end{figure}

{In the new MD-SC code design framework, we effectively answer two questions: which circulants to relocate, and where to relocate them.} We note that the relocations of circulants to the same positions in the auxiliary matrices preserve the special structure of SC codes, which makes them suitable for  applications that require low decoding latency.

\subsection{The Effects of Relocation of Circulants on Cycles}
We investigate the effect of relocating a subset of circulants of cycle $\mathcal{O}_k$, and we call this subset \textit{targeted} circulants. As we show, some relocations remove $\mathcal{O}_k$ from $\bold{H}_\text{SC}^\text{MD}$, while others preserve the cycle. We call the former group \textit{effective} relocations and the latter \textit{ineffective} relocations.\vspace{-0.0cm}

\begin{theorem}
Let $C_{\mathcal{O}_k}=\{\mathcal{C}_{i_1,j_1},\mathcal{C}_{i_2,j_2},\dots,\mathcal{C}_{i_k,j_k}\}$ be the sequence of circulants in $\bold{H}_\textnormal{SC}$ that are visited in a clockwise order by $\mathcal{O}_k$. If the following equation holds, the cycle-$k$ is preserved in $\bold{H}_\textnormal{SC}^\textnormal{MD}$ (ineffective relocations),
\vspace{-0.0cm}\begin{equation}
\sum_{u=1}^{k}(-1)^{u}M(\mathcal{C}_{i_u,j_u})\overset{3}{=}0.
\vspace{-0.00cm}\end{equation}
Otherwise, the relocations are effective, and the cycle $\mathcal{O}_k$ is removed from the MD-SC code
\footnote{{Equation (4) resembles Fossorier's condition on circulant powers of a CB code that makes a cycle in the protograph results in multiple cycles in the lifted graph of the code \cite{FossorierIT2004}.}}.
\end{theorem}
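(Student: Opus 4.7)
The plan is to interpret $\mathbf{H}_{\text{SC}}^{\text{MD}}$ as a three-fold block-lifting of $\mathbf{H}_{\text{SC}}$ and then translate closure of a lifted cycle into an algebraic condition on block indices. First, I would observe from (2) that the block at position $(r,c)$, $r,c\in\{0,1,2\}$, equals $\mathbf{H}_{\text{SC}}'$, $\mathbf{P}$, or $\mathbf{Q}$ exactly when $r-c$ is congruent to $0$, $1$, or $2$ modulo $3$, respectively. Consequently, any non-zero circulant $\mathcal{C}_{i,j}$ of $\mathbf{H}_{\text{SC}}$ appears in $\mathbf{H}_{\text{SC}}^{\text{MD}}$ at exactly the three block positions $(r,c)$ satisfying $r-c\overset{3}{=}M(\mathcal{C}_{i,j})$---one per block row and one per block column.

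Next, I would trace a candidate lifted cycle by assigning block coordinates $(r_u,c_u)$ to the $u$-th visit, subject to $r_u-c_u\overset{3}{=}M(\mathcal{C}_{i_u,j_u})$. By Definition~1, consecutive circulants in $C_{\mathcal{O}_k}$ share a row for odd $u$ and a column for even $u$, with the closing transition $j_k=j_1$ occurring at the even index $u=k$. A shared row forces $r_{u+1}=r_u$, hence $c_{u+1}-c_u\overset{3}{=}M(\mathcal{C}_{i_u,j_u})-M(\mathcal{C}_{i_{u+1},j_{u+1}})$, while a shared column forces $c_{u+1}=c_u$, hence $r_{u+1}-r_u\overset{3}{=}M(\mathcal{C}_{i_{u+1},j_{u+1}})-M(\mathcal{C}_{i_u,j_u})$. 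Telescoping the block-row increments around the cycle (with the convention $\mathcal{C}_{k+1}\equiv\mathcal{C}_1$) yields a net row displacement of $\sum_{u\text{ odd}}M(\mathcal{C}_{i_u,j_u})-\sum_{u\text{ even}}M(\mathcal{C}_{i_u,j_u})$, which equals $-\sum_{u=1}^{k}(-1)^{u}M(\mathcal{C}_{i_u,j_u})$ modulo $3$; an identical computation gives the same value for the column displacement.

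The cycle closes in $\mathbf{H}_{\text{SC}}^{\text{MD}}$ if and only if both displacements vanish modulo $3$, which is precisely (4); conversely, whenever (4) holds, any admissible choice of $(r_1,c_1)$ propagates consistently around the cycle, so the lifted cycle exists. The main subtlety will be to keep careful track of (i) which parity of $u$ carries a shared row versus a shared column (dictated by the clockwise convention of Definition~1), (ii) the closing step, which relies on $k$ being even by bipartiteness and on $j_k=j_1$, and (iii) the case where $\mathcal{O}_k$ revisits a circulant as in Fig.~3(b), in which each visit must be treated as an independent term in the signed sum. The resemblance of (4) to Fossorier's lifting condition is not coincidental: the construction in (2) is algebraically a three-fold cyclic lifting of $\mathbf{H}_{\text{SC}}$ at the block level, and (4) is the corresponding protograph cycle condition.
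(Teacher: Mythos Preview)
Your proposal is correct and follows essentially the same approach as the paper: both arguments track the cumulative block-row and block-column displacements as one traverses the circulants of $\mathcal{O}_k$ in $\mathbf{H}_\text{SC}^\text{MD}$, and both reduce closure of the lifted cycle to the vanishing of $\sum_{u=1}^{k}(-1)^{u}M(\mathcal{C}_{i_u,j_u})$ modulo $3$. Your invariant $r-c\overset{3}{=}M(\mathcal{C}_{i,j})$ is a slightly cleaner way of encoding what the paper states as the cyclic orders $\{\mathbf{H}_\text{SC}',\mathbf{Q},\mathbf{P}\}$ (left to right) and $\{\mathbf{H}_\text{SC}',\mathbf{P},\mathbf{Q}\}$ (top to bottom), but the telescoping computation and the conclusion are identical.
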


\begin{proof}
Let $(\mathcal{C}_{i_u,j_u},\mathcal{C}_{i_{u+1},j_{u+1}})$ be a pair of consecutive circulants on $\mathcal{O}_k$, where $u\in\{1,\dots,k\}$ and $\mathcal{C}_{i_{k+1},j_{k+1}}=\mathcal{C}_{i_1,j_1}$. By definition, two circulants have the same row (resp., column) group index, i.e., $i_u=i_{u+1}$ (resp., $j_u=j_{u+1}$), when $u\overset{2}{=}1$ (resp., $u\overset{2}{=}0$).

The matrix $\bold{H}_\textnormal{SC}^\textnormal{MD}$ is formed of $9$ submatrices, see (2).
{Here, a unit of {a} multi-dimensional horizontal (resp., vertical) shift is defined as cyclically going one submatrix right (resp., down) when {we go from $\mathcal{C}_{i_u,j_u}$ to $\mathcal{C}_{i_{u+1},j_{u+1}}$.}} {The submatrices of $\bold{H}_\textnormal{SC}^\textnormal{MD}$ appear in the cyclic order $\{\bold{H}_\textnormal{SC}',\bold{Q},\bold{P}\}$, with the MD mapping $\{0,2,1\}$, {from left to right}. These submatrices appear in the cyclic order $\{\bold{H}_\textnormal{SC}',\bold{P},\bold{Q}\}$, with the MD mapping $\{0,1,2\}$, {from top to bottom}. Thus, the multi-dimensional horizontal shift, when we go from $\mathcal{C}_{i_u,j_u}$ to $\mathcal{C}_{i_{u+1},j_{u+1}}$, $u\in\{1,3,\dots,k-1\}$, is $(M(\mathcal{C}_{i_u,j_u})-M(\mathcal{C}_{i_{u+1},j_{u+1}}))_{3}$ units.}
Similarly, the multi-dimensional vertical shift, when we go from $\mathcal{C}_{i_u,j_u}$ to $\mathcal{C}_{i_{u+1},j_{u+1}}$, $u\in\{0,2,\dots,k\}$, is $(M(\mathcal{C}_{i_{u+1},j_{u+1}})-M(\mathcal{C}_{i_u,j_u}))_{3}$ units. The total multi-dimensional horizontal and vertical shifts when we traverse the circulants of $\mathcal{O}_k$ in $\bold{H}_\textnormal{SC}^\textnormal{MD}$ are $\delta_{H}$ and $\delta_{V}$, respectively:
\begin{equation}
\begin{split}
&\delta_{H}=(\hspace{-0.5cm}\sum_{u\in\{1,3\dots,k-1\}}\hspace{-0.5cm}[M(\mathcal{C}_{i_u,j_u})-M(\mathcal{C}_{i_{u+1},j_{u+1}})])_{3}\\
&=(\sum_{u=1}^{k}[(-1)^{u+1}M(\mathcal{C}_{i_u,j_u})])_{3}\hspace{-0.07cm}=\hspace{-0.07cm}(-\sum_{u=1}^{k}[(-1)^{u}M(\mathcal{C}_{i_u,j_u})])_{3}{,}\\
&\delta_{V}=(\hspace{-0.34cm}\sum_{u\in\{2,4\dots,k\}}\hspace{-0.34cm}[M(\mathcal{C}_{i_{u+1},j_{u+1}})-M(\mathcal{C}_{i_u,j_u})])_{3}\\
&=(\sum_{u=1}^{k}[(-1)^{u+1}M(\mathcal{C}_{i_u,j_u})])_{3}\hspace{-0.07cm}=\hspace{-0.07cm}(-\sum_{u=1}^{k}[(-1)^{u}M(\mathcal{C}_{i_u,j_u})])_{3}.\\
\end{split}
\vspace{-0cm}
\end{equation}

The cycle $\mathcal{O}_k$ exists ({is} preserved) in $\bold{H}_\textnormal{SC}^\textnormal{MD}$, {if and only if} the start and end submatrices are the same when we traverse the $k$ circulants of $\mathcal{O}_k$. This requires the total multi-dimensional horizontal and vertical shifts ($\delta_{H}$ and $\delta_{V}$) to be zero, which results in (4). {Otherwise, the cycle is removed.} 
\end{proof}

\begin{Corollary}
Let  $\{\mathcal{C}_{i_x,j_x} \text{ } | \text{ }\mathcal{C}_{i_x,j_x}\in C_{\mathcal{O}_k},M(\mathcal{C}_{i_x,j_x})>0\}$ be the set of relocated circulants of $\mathcal{O}_k$ with cardinality $n$. The relocations are ineffective {if and only if} the following equation holds for the relocated circulants:
\begin{equation}
\sum_{x\overset{2}{=}0}M(\mathcal{C}_{i_x,j_x})\overset{3}{=}\sum_{x\overset{2}{=}1}M(\mathcal{C}_{i_x,j_x}).
\end{equation}
\end{Corollary}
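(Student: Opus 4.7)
The plan is to derive this corollary directly from Theorem~1 by regrouping the alternating sum $\sum_{u=1}^{k}(-1)^{u}M(\mathcal{C}_{i_u,j_u})$ according to the parity of the index $u$. First, I would split this sum into its even-indexed and odd-indexed portions. Since $(-1)^{u}$ equals $+1$ exactly when $u\overset{2}{=}0$ and $-1$ when $u\overset{2}{=}1$, the alternating sum rewrites as
\begin{equation*}
\sum_{u\overset{2}{=}0} M(\mathcal{C}_{i_u,j_u})-\sum_{u\overset{2}{=}1} M(\mathcal{C}_{i_u,j_u}),
\end{equation*}
so that condition (4) of Theorem~1 becomes the congruence $\sum_{u\overset{2}{=}0} M(\mathcal{C}_{i_u,j_u})\overset{3}{=}\sum_{u\overset{2}{=}1} M(\mathcal{C}_{i_u,j_u})$ over the full index range $u\in\{1,\dots,k\}$.

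Next, I would note that any circulant on $\mathcal{O}_k$ that is not relocated has $M(\mathcal{C}_{i_u,j_u})=0$ and therefore contributes $0$ to whichever partial sum its index belongs to, so deleting these vanishing terms preserves both sums (and hence their congruence modulo $3$) exactly. Restricting the summation indices to the $n$ relocated circulants $\{\mathcal{C}_{i_x,j_x}\mid M(\mathcal{C}_{i_x,j_x})>0\}$ therefore yields precisely the displayed condition in the corollary, and since Theorem~1 is already a biconditional, the \textbf{if and only if} is inherited automatically.

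I do not anticipate a serious obstacle here, as the argument is essentially a parity-based rearrangement of the criterion already established in Theorem~1. The only point that requires care is that the parity classes ``$x\overset{2}{=}0$'' and ``$x\overset{2}{=}1$'' must refer to the \emph{original} position of each relocated circulant within the sequence $C_{\mathcal{O}_k}$, and thus to its role as an endpoint of a horizontal hop versus a vertical hop on the cycle, rather than to a fresh reindexing of the relocated circulants from $1$ to $n$; otherwise the correspondence with the multi-dimensional horizontal and vertical shifts $\delta_H$ and $\delta_V$ used in the proof of Theorem~1 would be broken.
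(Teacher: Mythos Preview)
Your proposal is correct and is exactly the natural derivation the paper implicitly relies on: the corollary is stated immediately after Theorem~1 without an explicit proof, so the paper treats it as an immediate consequence of (4) via precisely the parity split and zero-term deletion you describe. Your caveat that the parity classes must refer to the original positions in $C_{\mathcal{O}_k}$ is well taken and consistent with the paper's notation.
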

If equation (4) holds for the circulants of $\mathcal{O}_k$, three instances of the cycle, in three constituent SC codes, form three cycles-$k$ in the MD-SC code (ineffective relocations). Theorem~2 investigates the situation when (4) is not satisfied.
\begin{theorem}
If (4) does not hold for the circulants of a cycle-$k$, three instances of the cycle, in three constituent SC codes, form a cycle of length $3k$ in the MD-SC code.
\end{theorem}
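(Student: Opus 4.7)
The plan is to build on the shift computation already derived in the proof of Theorem~1. Let $\delta := \left(-\sum_{u=1}^{k}(-1)^{u}M(\mathcal{C}_{i_u,j_u})\right)_3$, so by (5) traversing the $k$ circulants of $\mathcal{O}_k$ in $\bold{H}_\textnormal{SC}^\textnormal{MD}$ induces a block-level shift of $\delta$ units both horizontally and vertically across the $3{\times}3$ block partition in (2). The hypothesis that (4) fails is precisely $\delta \in \{1,2\}$.

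First I would pick a starting circulant in, say, the top-left block position $(0,0)$ and follow the first instance of $\mathcal{O}_k$ for $k$ steps. By (5), this lands the walk at a circulant in block position $(\delta,\delta)$, which lies inside a different constituent SC code. Because the block matrix in (2) is cyclic in both its block rows and block columns, and because rule (3) forces identical relocations across all $L$ replicas within each constituent, the shifted constituent arrangement that appears in block row $\delta$ admits the \emph{same} circulant pattern realizing $\mathcal{O}_k$, starting from this new block position. The walk therefore threads into a second instance of $\mathcal{O}_k$. Iterating once more brings the walk to block position $(2\delta,2\delta)$ after $2k$ steps, and a third iteration returns it to $(0,0)$ after exactly $3k$ steps, closing the walk.

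Next I would argue that this closed walk is a single simple cycle of length exactly $3k$. Within any one $k$-step segment, no premature closure can occur since the segment faithfully embeds the simple cycle $\mathcal{O}_k$ and lives entirely within a fixed pair of block row/column positions. Across segments, the block position shifts by the nonzero residues $\delta$ and $2\delta$ modulo $3$, so circulants visited in distinct segments sit in distinct submatrices of $\bold{H}_\textnormal{SC}^\textnormal{MD}$ and therefore cannot coincide. Hence the three instances concatenate into a single cycle of length exactly $3k$.

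The most delicate point is rigorously justifying that the walk continuing from block position $(\delta,\delta)$ does correspond to a bona fide instance of $\mathcal{O}_k$ in the next constituent SC code, rather than an unrelated trajectory in $\bold{H}_\textnormal{SC}^\textnormal{MD}$. This rests on the cyclic block symmetry of (2) together with rule (3); once this identification is made, the length claim is a direct consequence of the shift arithmetic in (5) applied three times in succession.
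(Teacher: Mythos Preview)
Your argument is essentially the paper's own proof: define the residual shift $\delta\in\{1,2\}$ from (5), traverse $\mathcal{O}_k$ three times to accumulate block shifts $\delta,2\delta,3\delta\equiv 0\pmod 3$, and conclude that the walk closes only after $3k$ steps. You go a bit further than the paper by trying to argue simplicity of the resulting $3k$-cycle; note, however, that the phrase ``lives entirely within a fixed pair of block row/column positions'' is not accurate (the block position does move within a segment), though your cross-segment argument---that the three segments occupy three distinct diagonal cosets of the $3\times3$ block grid---is the correct reason vertices cannot repeat across segments.
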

\begin{proof}
Consider a cycle $\mathcal{O}_k$ with $C_{\mathcal{O}_k}=\{\mathcal{C}_{i_1,j_1},\dots,\mathcal{C}_{i_k,j_k}\}$ such that $(-\sum_{u=1}^{k}[(-1)^{u}M(\mathcal{C}_{i_u,j_u})])_{3}=y$, where $y=1$ or $2$. {Consequently, (4) does not hold, and the cycle-$k$ is not preserved in the MD-SC code.} We traverse the circulants of $\mathcal{O}_k$ starting from $\mathcal{C}_{i_1,j_1}$ in a clockwise order. After traversing all $k$ circulants, we reach circulant $\mathcal{C}_{i_1,j_1}$ in {a submatrix that is (cyclically) $y\defop0$ units right and $y\defop0$ units down from the submatrix we started from}. {Thus, a cycle of length $k$ cannot be completed.} We proceed traversing the circulants until we reach $\mathcal{C}_{i_1,j_1}$ in a submatrix that is $2y\defop0$ units right and $2y\defop0$ units down from the submatrix we started from. {Thus, even a cycle of length $2k$ cannot be completed.} We proceed traversing the circulants one more time, and we reach circulant $\mathcal{C}_{i_1,j_1}$ in a submatrix that is $3y\overset{3}{=}0$ units right and $3y\overset{3}{=}0$ units down from the submatrix we started from, which is basically the submatrix we started traversing from. {Consequently, the resulting cycle is of length $3k$.}\vspace{-0.0cm}
\end{proof}
{By connecting three SC codes with girth $k$, according to the structure we presented in this paper, the girth of the MD-SC code is at least $k$. In particular, the three instances of each cycle-$k$ in the constituent SC codes result in either three cycles-$k$ or one cycle-$3k$ in the MD-SC code. The first situation is caused by ineffective relocations, and is avoided as much as possible in the algorithm that we present for the MD-SC code construction.}
In Example~2, we explore Theorems~1 and 2 for different scenarios. In each scenario, a different subset of circulants of $\mathcal{O}_k$ are relocated.\vspace{-0.0cm}
\begin{example} Let $C_{\mathcal{O}_k}=\{\mathcal{C}_{i_1,j_1},\dots,\mathcal{C}_{i_k,j_k}\}$ be the sequence of circulants of $\mathcal{O}_k$, and $n$ be number of its relocated circulants.
\begin{enumerate}[leftmargin=0.5cm]
\item Let $n=1$ and $\mathcal{C}_{i_a,j_a}$ be the relocated circulant. Then,  $\sum_{u=1}^{k}(-1)^{u}M(\mathcal{C}_{i_u,j_u})\defop0$. Therefore, $\mathcal{O}_k$ is removed, and a cycle-$3k$ is formed. Fig.~4(a) shows $\mathcal{C}_{i_a,j_a}{\rightarrow}\bold{P}$. Fig.~4(b) shows that a cycle-$3k$ (shown in orange) is formed because of the relocation. The green solid border represents that this relocation is effective.
\item {Let $n=2$ and $\mathcal{C}_{i_a,j_a}$, $\mathcal{C}_{i_b,j_b}$ be the relocated circulants.} Then, {whether} $\mathcal{O}_k$ is removed or not {depends} on the arrangement of the relocated circulants on $\mathcal{O}_k$ along with the auxiliary matrices they are relocated to. {Suppose} $D_{\mathcal{O}_k}(\mathcal{C}_{i_a,j_a},\mathcal{C}_{i_b,j_b})=1$, and the two circulantes are relocated to the same auxiliary matrix. {Thus,} $\sum_{u=1}^{k}(-1)^{u}M(\mathcal{C}_{i_u,j_u})\overset{3}{=}0$ and $\mathcal{O}_k$ is preserved (ineffective relocations). Fig.~5(a) shows $\mathcal{C}_{i_a,j_a},\mathcal{C}_{i_b,j_b}{\rightarrow}\bold{P}$. Fig.~5(b) shows that three cycles-$k$ (purple, blue, and yellow) are formed. The red solid border represents that these relocations are ineffective. 
\item If all $n$ targeted circulants are relocated to $\bold{P}$ (resp., $\bold{Q}$), and each pair of consecutive relocated circulants {has} even distance on $\mathcal{O}_k${, see Definition~1(5)}, the relocations are ineffective when $n\overset{3}{=}0$. Let $\mathcal{C}_{i_v,j_v}$ be the first relocated circulant in $C_{\mathcal{O}_k}$. {Then,}\vspace{-0.0cm}
\begin{equation}
\begin{split}
&\sum_{u=1}^{k}(-1)^{u}M(\mathcal{C}_{i_u,j_u})=\sum_{M(\mathcal{C}_{i_u,j_u})>0}(-1)^{u}M(\mathcal{C}_{i_u,j_u})\\
&\hspace{0.3cm}=M(\mathcal{C}_{i_v,j_v})\sum_{M(\mathcal{C}_{i_u,j_u})>0}(-1)^{u}\\
&\hspace{0.3cm}=M(\mathcal{C}_{i_v,j_v})(-1)^{v}(1+\dots+1)=nM(\mathcal{C}_{i_v,j_v})(-1)^{v},
\end{split}\vspace{-0.3cm}
\end{equation}
which $\overset{3}{=}0$ only if $n\overset{3}{=}0$.
\item {If all $n$ targeted circulants are relocated to $\bold{P}$ (resp., $\bold{Q}$), and each pair of consecutive relocated circulants {has} odd distance on $\mathcal{O}_k$,{, see Definition~1(5),} which can only happen if $n\overset{2}{=}0$ since $k\overset{2}{=}0$, the relocations are always ineffective.} Let $\mathcal{C}_{i_v,j_v}$ be the first relocated circulant in $C_{\mathcal{O}_k}$. {Then,}
\begin{equation}
\begin{split}
\sum_{u=1}^{k}(-1)^{u}&M(\mathcal{C}_{i_u,j_u})=M(\mathcal{C}_{i_v,j_v})\sum_{M(\mathcal{C}_{i_u,j_u})>0}(-1)^{u}\\
&=M(\mathcal{C}_{i_v,j_v})(-1)^{v}(1-1+\dots-1),
\end{split}
\end{equation}
{which $\overset{3}{=}0$ always since $n\overset{2}{=}0$.}
\end{enumerate}
\begin{figure}
\centering
\label{fig_1_circulant}
\begin{tabular}{cc}
\includegraphics[width=0.20\textwidth]{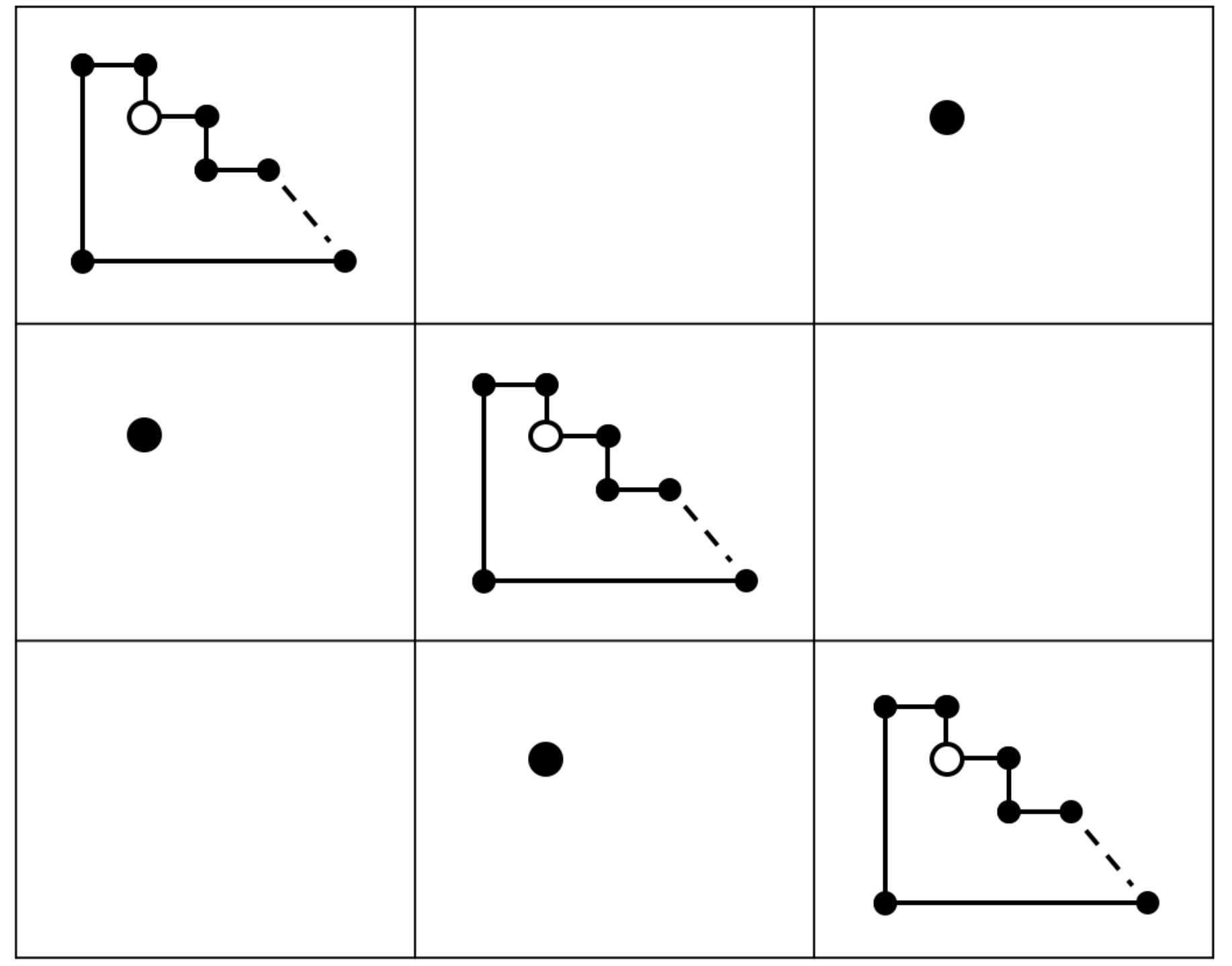}&\includegraphics[width=0.20\textwidth]{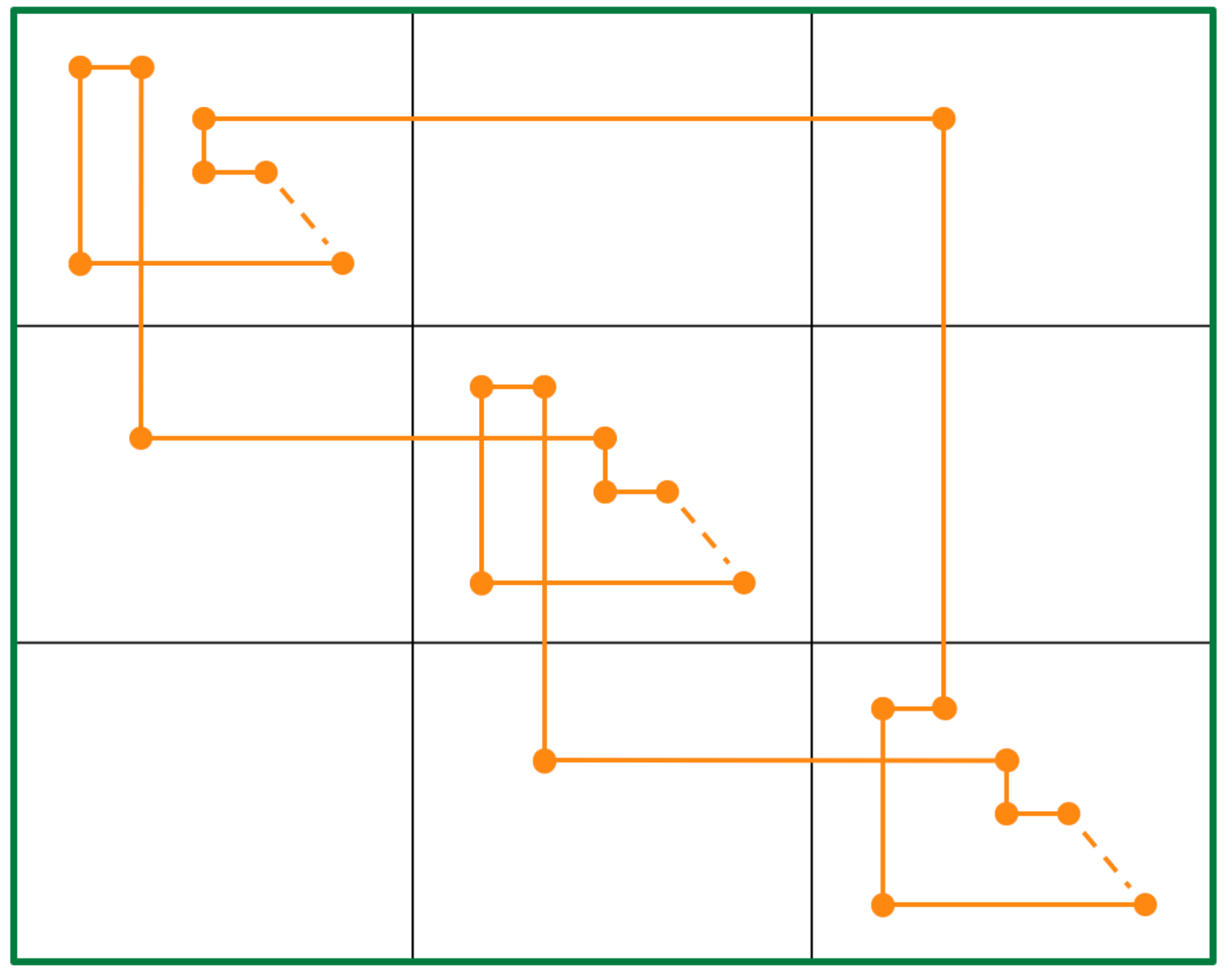}\\
(a)&(b)
\end{tabular}
\caption{(a) $\mathcal{C}_{i_a,j_a}{\rightarrow}\bold{P}$. The white circles show original locations of the relocated circulant. (b) A cycle-$3k$ is formed (effective relocation).\vspace{-0.3cm}}
\end{figure}
\begin{figure}
\centering
\begin{tabular}{cc}
\includegraphics[width=0.20\textwidth]{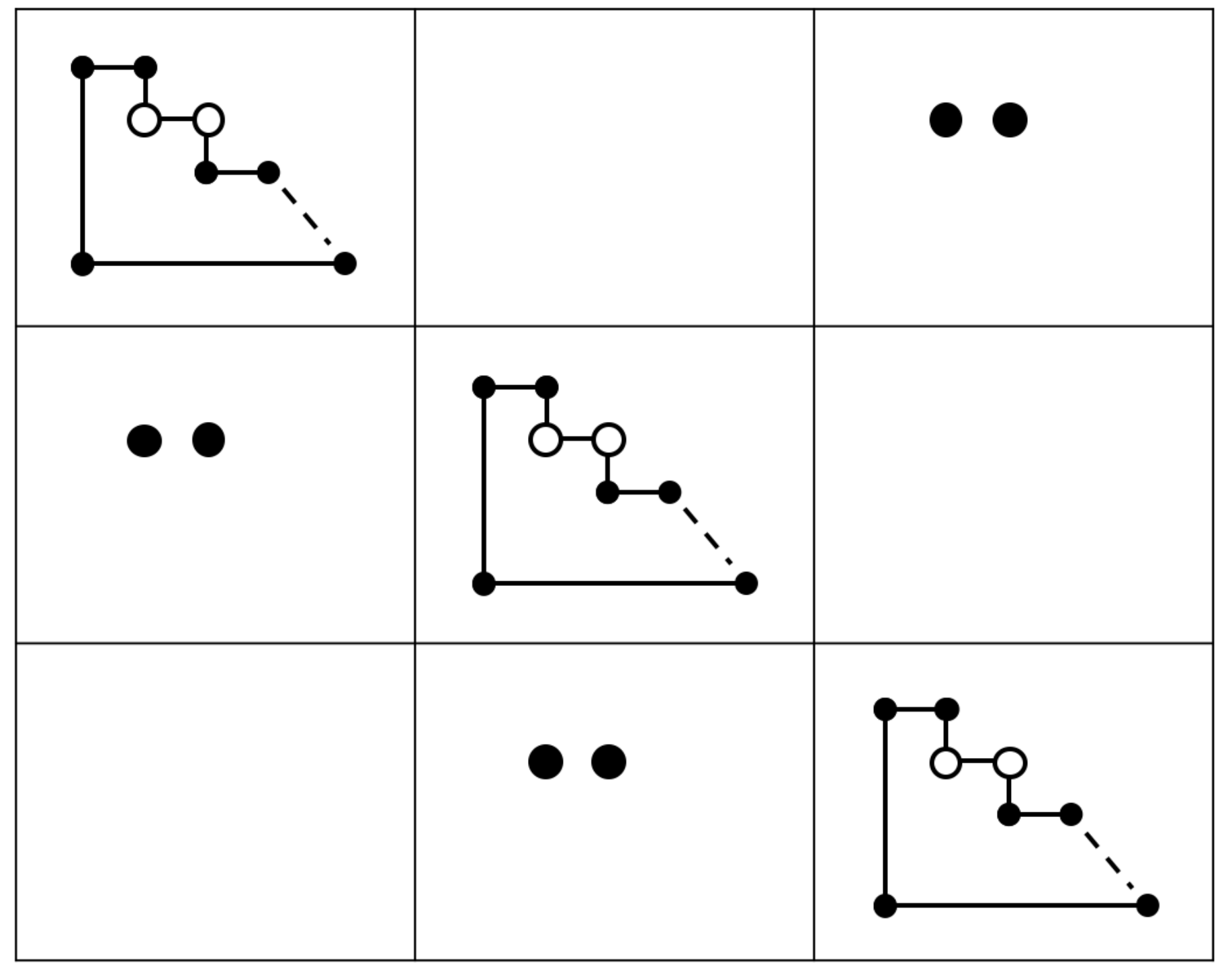}&\includegraphics[width=0.20\textwidth]{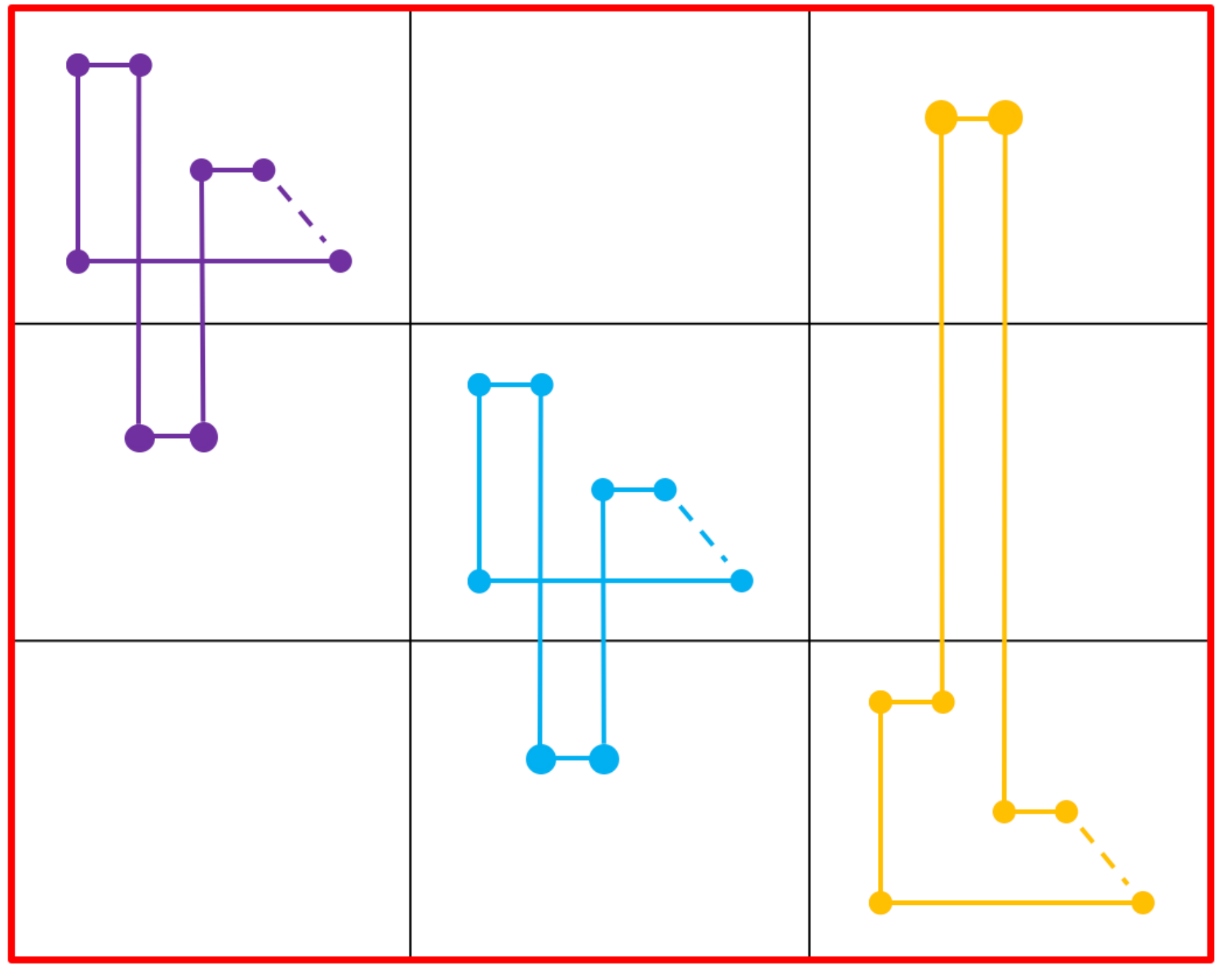}\\
(a)&(b)
\end{tabular}
\caption{(a) $\{\mathcal{C}_{i_a,j_a},\mathcal{C}_{i_b,j_b}\}{\rightarrow}\bold{P}$. (b) Three cycles-$k$ are formed (ineffective relocations).\vspace{-0.6cm}}
\end{figure}
\end{example}

\begin{remark}
A circulant can appear more than once in $C_{\mathcal{O}_k}$, e.g., see Fig.~3(b). A circulant that is repeated $r$ times in the sequence can be interpreted in our analysis as $r$ different circulants; {each two circulants from this group have an even distance on $\mathcal{O}_k$.} The relocation of a circulant that appears $r$ times {is then equivalent to the relocation of $r$ circulants with the above property to the same auxiliary matrix.}\vspace{-0.0cm}
\end{remark}

\subsection{Majority Voting Algorithm for MD-SC Code Design}
In this subsection, we explain our new systematic framework for constructing MD-SC codes. Our framework is based on a majority voting policy and aims at minimizing the population of cycles-$k$. As stated in Section~II, the MD coupling is performed via relocating problematic circulants to two auxiliary matrices, $\bold{P}$ and $\bold{Q}$. {After relocating one circulant, the ranking of the problematic circulants (with respect to the number of cycles each of them is involved in) changes. Thus, the relocations of circulants are performed sequentially.}

Initially, $\bold{H}_\text{SC}'=\bold{H}_\text{SC}$ and $\bold{P}=\bold{Q}=\bold{0}$. {At each iteration, one non-zero circulant in replica ${\bold{R}_d}$ of $\bold{H}_\text{SC}'$, {$d=\lceil L/2\rceil$}, which is the circulant involved in the {highest} number of cycles-$k$ in $\bold{H}_\text{SC}^\text{MD}$, is targeted for relocation.} Each cycle-$k$ in $\bold{H}_\text{SC}$ that has the targeted circulant in its sequence {gives votes (possibly zero votes)} regarding the relocation. The decision is made based on the majority of the votes. {The same decision also applies to the other $(L-1)$ instances of this circulant in $\bold{H}_\text{SC}'$.}
The relocations are performed sequentially until the MD coupling threshold is achieved, or relocation does not decrease the population of cycles-$k$ anymore.

Consider a targeted circulant $\mathcal{C}_{i_u,j_u}$. There are three possible actions for this circulant: {relocate to $\bold{P}$ ({i.e.,} $M(\mathcal{C}_{i_u,j_u})=1$), relocate to $\bold{Q}$ ({i.e.,} $M(\mathcal{C}_{i_u,j_u})=2$), and keep in $\bold{H}_\textnormal{SC}'$ ({i.e.,} $M(\mathcal{C}_{i_u,j_u})=0$)}. {Each cycle $\mathcal{O}_k$ in $\bold{H}_\text{SC}$ that has the targeted circulant in its sequence votes for a subset of these actions (which can be empty or can have more than one action). Cycles give their vote(s) on these actions as follows:} 
\begin{enumerate}
\item \textbf{Relocate to }$\bold{P}$: If $\mathcal{C}_{i_u,j_u}{\rightarrow}\bold{P}$ removes $\mathcal{O}_k$ from $\bold{H}_\text{SC}^\text{MD}$ ((4) is violated).
\item \textbf{Relocate to }$\bold{Q}$: If $\mathcal{C}_{i_u,j_u}{\rightarrow}\bold{Q}$ removes $\mathcal{O}_k$ from $\bold{H}_\text{SC}^\text{MD}$.
\item \textbf{Keep in }$\bold{H}_\textnormal{SC}'$: If leaving $\mathcal{C}_{i_u,j_u}$ in $\bold{H}_\textnormal{SC}'$ removes $\mathcal{O}_k$ from $\bold{H}_\text{SC}^\text{MD}$.
\end{enumerate}
We note that the votes are given considering the previously-relocated circulants of $\mathcal{O}_k$ and equation (4). Example~3 studies several scenarios, and in each one, $\mathcal{O}_k$ gives different votes regarding a targeted circulant.
\begin{example}
Consider cycle $\mathcal{O}_k$ and circulant $\mathcal{C}_{i_u,j_u}\in C_{\mathcal{O}_k}$.\\
\underline{Scenario 1:} No circulants of $\mathcal{O}_k$ are relocated in previous iterations, and $\mathcal{C}_{i_u,j_u}$ appears once in $C_{\mathcal{O}_k}$ (i.e., $r=1$). Thus, (4) is not satisfied {under relocation} regardless of which auxiliary matrix $\mathcal{C}_{i_u,j_u}$ is relocated to, and $\mathcal{O}_k$ votes for ``relocate to $\bold{P}$'' and ``relocate to $\bold{Q}$''.\\
\underline{Scenario 2:} No circulants of $\mathcal{O}_k$ are relocated in previous iterations, and $\mathcal{C}_{i_u,j_u}$ appears three times in $C_{\mathcal{O}_k}$ ($r=3$). Then, $\mathcal{O}_k$ cannot be removed via relocation, and $\mathcal{O}_k$ does not give any votes.
Refer to the relocation of three circulants with even  mutual distances to the same auxiliary matrix in Example~2.\\
\underline{Scenario 3:} Circulant $\mathcal{C}_{i_v,j_v}\in C_{\mathcal{O}_k}$ {is already relocated to $\bold{P}$ in previous iterations}, $D_{\mathcal{O}_k}(\mathcal{C}_{i_u,j_u},\mathcal{C}_{i_v,j_v})=2$, and both circulants appear once in $C_{\mathcal{O}_k}$ (i.e., $r=1$). Then, (4) is violated if $\mathcal{C}_{i_u,j_u}{\rightarrow}\bold{P}$ or with no relocation, and $\mathcal{O}_k$ votes for ``relocate to $\bold{P}$'' and ``keep in $\bold{H}_\textnormal{SC}'$''.
\end{example}

Assume there are $\beta$ cycles-$k$ in $\bold{H}_\textnormal{SC}$. Each of these $\beta$ cycles votes for a subset of actions $\{$relocate to $\bold{P}$, relocate to $\bold{Q}$, keep in $\bold{H}_\textnormal{SC}'\}$ 
for the targeted circulant. The collective voting result is considered for making a decision. {If the number of votes for ``keep in $\bold{H}_\textnormal{SC}'$'' is more than the number of votes for ``relocate to $\bold{P}$'' and ``relocate to $\bold{Q}$'', no relocation is performed. This is also an indicator for termination of the process since by relocating the most problematic circulant, the population of cycles-$k$ increases (the relocation does not help any more). Otherwise, based on which auxiliary matrix of the two, $\bold{P}$ and $\bold{Q}$, has more votes, the relocation is performed.}

We highlight three points here: 1) The targeted circulant $\mathcal{C}_{i_u,j_u}$ is chosen among the non-zero circulants of $\bold{H}_\text{SC}'$ to increase the multi-dimensional coupling. 2) The most problematic circulant is chosen based on the cycles-$k$ in $\bold{H}_\textnormal{SC}^\textnormal{MD}$ (active cycles). 3) Each cycle in $\bold{H}_\textnormal{SC}$ that visits {$\mathcal{C}_{i_u,j_u}$} (even those that are removed in $\bold{H}_\textnormal{SC}^\textnormal{MD}$ as a result of the previous relocations) gives votes regarding the relolation of {$\mathcal{C}_{i_u,j_u}$}. This is because some cycles that are removed in the previous iterations {may appear again after relocating new circulants}.

{As the final (optional) step}, a post-processing circulant power optimizer {(PP CPO)} is performed to remove as many as possible of remaining cycles-$k$ in $\bold{H}_\textnormal{SC}^\textnormal{MD}$. An algebraic condition must hold on the powers of a group of $k$ non-zero circulants in order that they form a cycle-$k$ in a CB code \cite{FossorierIT2004}. {The {PP CPO} adjusts the powers of the relocated circulants to break the necessary condition for as many remaining cycles-$k$ as possible. The power of a relocated circulant is changed if it results in a lower number of cycles-$k$ while creating zero cycles of lower lengths. Our {pseudo-algorithm} for constructing MD-SC codes is given as Algorithm~1.

\begin{algorithm}
\caption{{Constructing MD-SC Codes}}\label{algo1}
\begin{algorithmic}[1]
\State \textbf{Inputs:} $\bold{H}_\textnormal{SC}$, $k$, and $\mathcal{T}$.
\State \textbf{Initialize:} $\bold{P}=\bold{Q}=\bold{0}$, $\bold{H}_\textnormal{SC}'=\bold{H}_\textnormal{SC}$, and $M(\mathcal{C}_{i_u,j_u})=0$.
\State {Locate all cycles-$k$ in $\bold{H}_\textnormal{SC}$ that visit the circlants (at least two circulants) in {replica $\bold{R}_{\lceil L/2\rceil}$} of $\bold{H}_\textnormal{SC}$.}
\State Mark all cycles-$k$ found in step 3 as active.
\State Choose $\mathcal{C}_{i_u,j_u}\neq\bold{0}$ such that $M(\mathcal{C}_{i_u,j_u})=0$, and $\mathcal{C}_{i_u,j_u}$ {appears in sequences of active cycles more than any other circulant}.
\State Find all cycles (active/inactive) that have $\mathcal{C}_{i_u,j_u}$ in their sequences.
\State Each cycle in step 6 votes for a subset of actions $\{$relocate to $\bold{P}$, relocate to $\bold{Q}$, keep in $\bold{H}_\textnormal{SC}'\}$ that make it inactive.
\State {Find the action receiving the majority of votes.}
\State If the majority voting is for ``keep in $\bold{H}_\textnormal{SC}'$", go to step 13.
\State If the majority voting is for ``relocate to $\bold{P}$", $\mathcal{C}_{i_u,j_u}{\rightarrow}\bold{P}$. Otherwise, $\mathcal{C}_{i_u,j_u}{\rightarrow}\bold{Q}$.
\State Update the list of (active/inactive) cycles-$k$ using (4).
\State If the number of relocations is less than $\mathcal{T}$ and the number of active cycles is greater than $0$, go to step 5.
\State Construct $\bold{H}_\textnormal{SC}^\textnormal{MD}$ using (2).
\State {(Optional)} Perform {PP CPO} on the relocated circulants.
\State \textbf{Output:} $\bold{H}_\textnormal{SC}^\textnormal{MD}$.
\end{algorithmic}
\end{algorithm}

\section{Simulation Results}
\begin{figure*}[h!]
\centering
\footnotesize
\begin{tabular}{ll}
\hspace{-0.2cm}$\bold{PM}^1\hspace{-0.1cm}=\hspace{-0.1cm}\left[
\arraycolsep=3pt\def\arraystretch{1}
\begin{array}{ccccccccccccccccc}
0&1&0&1&0&1&0&1&0&1&0&1&0&1&0&1&1\\
1&0&1&0&1&0&1&0&1&0&1&0&1&0&1&0&0\\
0&0&0&0&0&0&0&0&1&1&1&1&1&1&1&1&1\\
1&1&1&1&1&1&1&1&0&0&0&0&0&0&0&0&0
\end{array}\right]$\hspace{-0.05cm},&\hspace{-0.5cm}
$\bold{CM}^1\hspace{-0.1cm}=\hspace{-0.1cm}\left[
\arraycolsep=2pt\def\arraystretch{1}
\begin{array}{ccccccccccccccccc}
0 &10&2&8&2&0&5&7&15&0&0&0&0&10&0&0&0\\
11&15&2&14&10&3&6&7&8&9&4&11&12&8&14&10&16\\
11&2&4&12&8&11&12&9&15&4&13&5&6&1&11&13&15\\
11&3&6&9&2&16&8&4&7&10&13&16&2&5&8&6&14
\end{array}\right]$.\\
\multicolumn{2}{c}{(a)}\\
\hspace{-0.2cm}$\bold{PM}^2\hspace{-0.1cm}=\hspace{-0.1cm}\left[
\arraycolsep=3pt\def\arraystretch{1}
\begin{array}{ccccccccccccccccccc}
0&1&1&0&1&2&0&2&2&0&1&1&0&1&2&0&2&2&2\\
1&0&0&1&0&0&1&0&0&2&2&2&2&2&1&2&1&1&1\\
2&2&2&2&2&1&2&1&1&1&0&0&1&0&0&1&0&0&0
\end{array}\right]$\hspace{-0.05cm},&\hspace{-0.47cm}
$\bold{CM}^2\hspace{-0.1cm}=\hspace{-0.1cm}\left[
\arraycolsep=2pt\def\arraystretch{1}
\begin{array}{ccccccccccccccccccc}
21&0 &16&3&19&1&0&0&21&5&0&0&1&0&9&0&16&1&0\\
 0 &11&7&3&4&5&6&7&8&9&10&11&12&13&14&15&16&17&18\\
 0&17&0 &6&8&10&12&14&16&18&20&22&1&3&5&19&9&11&13
 \end{array}\right]$\hspace{-0.05cm}.\\
\multicolumn{2}{c}{(b)}
\end{tabular}
\caption{The partitioning matrices and circulant power matrices obtained by the OO-CPO technique: a) SC-Code-1 {and 3}. b) SC-Code-2 {and 4}.}
\end{figure*}

\begin{figure*}[h!]
\footnotesize
\begin{tabular}{ll}
\hspace{-0.2cm}$\bold{M}^1\hspace{-0.0cm}=\hspace{-0.0cm}\left[
\arraycolsep=3pt\def\arraystretch{1}
\begin{array}{ccccccccccccccccc}
0&1&2&0&1&1&0&0&0&0&0&0&0&0&0&0&0\\
1&0&1&0&0&0&0&1&0&0&0&0&0&0&0&0&0\\
0&0&0&0&2&0&0&1&1&1&0&0&1&0&2&0&1\\
0&0&0&0&0&0&0&0&0&0&0&0&0&1&0&0&0
\end{array}\right]$,&\hspace{-0.3cm}
$\bold{M}^2\hspace{-0.0cm}=\hspace{-0.0cm}\left[
\arraycolsep=3pt\def\arraystretch{1}
\begin{array}{ccccccccccccccccccc}
0&1&0&0&0&0&0&0&0&0&0&0&0&2&0&0&0&0&0\\
0&0&0&1&0&0&1&0&0&0&0&0&0&0&1&0&1&1&0\\
0&0&0&0&0&2&0&0&0&1&0&1&2&0&0&0&0&0&1
\end{array}\right]$.\\
\multicolumn{2}{c}{(a)}\\ \multicolumn{2}{c}{
$\bold{CM}^3\hspace{-0.0cm}=\hspace{-0.0cm}\left[
\arraycolsep=2pt\def\arraystretch{1}
\begin{array}{ccccccccccccccccc}
0&\bold{10}&\bold{9}&8&\bold{2}&\bold{0}&5&7&15&0&0&0&0&10&0&0&0\\
\bold{14}&15&\bold{2}&14&10&3&6&\bold{12}&8&9&4&11&12&8&14&10&16\\
11&2&4&12&\bold{8}&11&12&\bold{14}&\bold{15}&\bold{4}&13&5&\bold{14}&1&\bold{11}&13&\bold{15}\\
11&3&6&9&2&16&8&4&7&10&13&16&2&\bold{5}&8&6&14
\end{array}\right]$.}\\
 \multicolumn{2}{c}{(b)}\\
\end{tabular}
\caption{(a) The MD mapping matrices for MD-SC-Code-1 and 2. (b) Circulant power matrices for MD-SC-Code-3.}
\end{figure*}

In this section, we compare MD-SC codes constructed by our new framework with their 1D-SC counterparts (1D-SC codes having the same length and nearly the same rate as the MD-SC codes). First, we demonstrate the reduction in the number of cycles-$k$ achieved by the effective MD coupling, {where $k$ is the length of the shortest cycle in the graph of the code.} Then, we show the performance improvement via BER curves. In our simulations, we consider the AWGN channel, and SC codes with {parameters} $m\in\{1,2\}$ and $\gamma\in\{3,4\}$.

Our code parameters are as follows: SC-Code-1 is an SC code with $\kappa=z=17$, $\gamma=4$, $m=1$, $L=10$, rate $0.74$, and length $2{,}890$ bits. SC-Code-2 is an SC code with $\kappa=19$, $z=23$, $\gamma=3$, $m=2$, $L=10$, rate $0.81$, and length $4{,}370$ bits. SC-Code-1 and SC-Code-2 are constructed by the OO-CPO technique \cite{1802_06481}. The partitioning matrix $\bold{PM}=[h_{i,j}]$ and circulant power matrix $\bold{CM}=[f_{i,j}]$, with dimensions $\gamma\times\kappa$, describe partitioning and circulant powers{, respectively}. A circulant with row group index $i$ and column group index $j$ in the block code $\bold{H}$ is assigned to the component matrix $\bold{H}_{h_{i,j}}$, and it has power $f_{i,j}$. These two matrices for SC-Code-1 and SC-Code-2 are shown in Fig.~6. {SC-Code-3 is similar to SC-Code-1 but with $L=30$. Thus, its rate and length are $0.76$ and $8{,}670$ bits, respectively. SC-Code-4 is similar to SC-Code-2 but with $L=30$. Thus, its rate and length are $0.83$ and $13{,}110$ bits, respectively. SC-Code-1 and SC-Code-3 have girth $6$. SC-Code-2 and SC-Code-4 have girth $8$.}

MD-SC-Code-1 and MD-SC-Code-2 are MD-SC codes constructed by our new framework. Their constituent SC codes are SC-Code-1 and SC-Code-2, and their MD densities are $\mathcal{T}_1=15$ (i.e., $22.06\%$) and $\mathcal{T}_2=12$ (i.e., $21.05\%$), respectively. The cycles of interest that we aim at minimizing their population in the MD code design are cycles-$6$ for MD-SC-Code-1 and cycles-$8$ for MD-SC-Code-2. The MD mapping matrices, i.e., $\bold{M}=[M(\mathcal{C}_{i,j})]$, obtained by our new framework are shown in Fig.~7(a). 
We note that $\mathcal{T}_1$ and $\mathcal{T}_2$ are the maximum relocations needed for constructing MD-SC codes. In particular, after reaching these two thresholds, relocating more problematic circulants does not decrease the population of the cycle of interest.
{MD-SC-Code-3 is constructed from MD-SC-Code-2 by applying the optional PP CPO step (step 14 of Algorithm~1), and its circulant power matrix is shown in Fig.~7(b). MD-SC-Code-1 has rate $0.74$ and length $8{,}670$ bits. MD-SC-Code-2 and MD-SC-Code-3 have rate $0.81$ and length $13{,}110$ bits.}

{TABLE~I shows the number of cycles of interest for MD-SC codes constructed by our new framework compared to their 1D-SC counterparts. According to TABLE~I, MD-SC-Code-1 has nearly $84\%$ fewer cycles-$6$ compared to SC-Code-3. Moreover, MD-SC-Code-2 and MD-SC-Code-3 have nearly $73\%$ and $75\%$ fewer cycles-$8$ compared to SC-Code-4, respectively.} 

\begin{figure}
	\vspace{-0.3cm}
	\centering
	\includegraphics[width=0.42\textwidth]{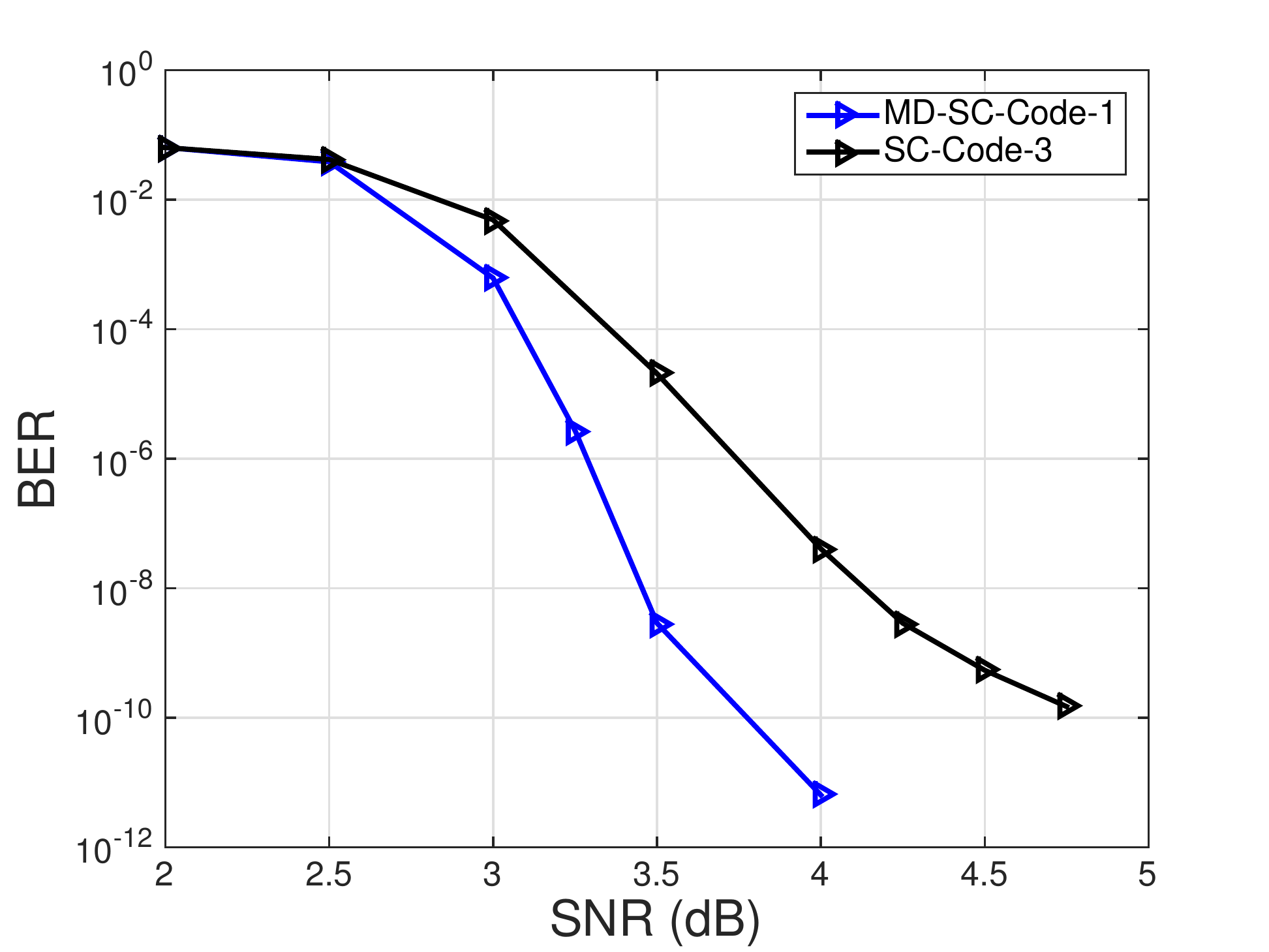}
	\vspace{-0.0cm}
	\caption{{The performance comparison of MD-SC-Code-1 and SC-Code-3.}}
	\vspace{-0.0cm}
\end{figure}

\begin{figure}
	\centering
	\vspace{-0.3cm}
	\includegraphics[width=0.42\textwidth]{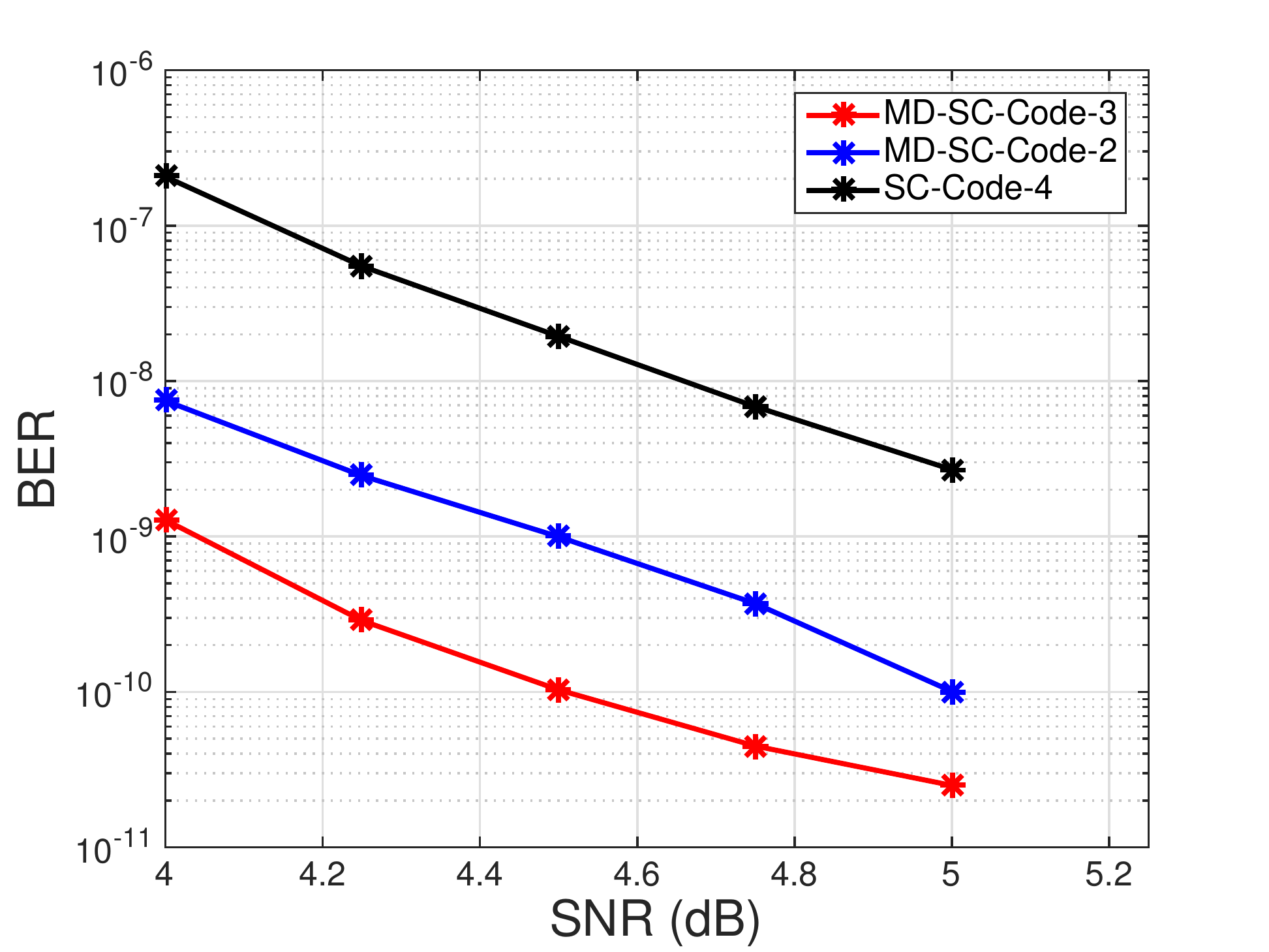}
	\vspace{-0.0cm}
	\caption{{The performance comparison of MD-SC-Code-2, MD-SC-Code-3, and SC-Code-4.}}
	\vspace{-0.0cm}
\end{figure}

\begin{table}
\centering
\caption{Comparison of the population of cycles of interest for MD-SC codes with their {1D-SC counterparts}.}
\begin{tabular}{|c|c|c|}
\hline
&{SC-Codes-3}&MD-SC-Code-1\\
\hline
cycles-$6$&{$91{,}494$}&$14{,}331$\\
\hline
\end{tabular}\vspace{0.1cm}

\begin{tabular}{|c|c|c|c|}
\hline
&{SC-Codes-4}&MD-SC-Code-2&MD-SC-Code-3\\
\hline
cycles-$8$&{$1{,}034{,}609$}&$280{,}968$&$253{,}851$\\
\hline
\end{tabular}\vspace{-0.35cm}
\end{table}

{Finally, we compare the performance of our MD-SC codes with their 1D-SC counterparts. 
As Fig.~8 shows, MD coupling results in nearly $3.8$ orders of magnitude BER performance improvement for MD-SC-Code-1 compared to its 1D-SC counterpart, i.e., SC-Code-3, at SNR$=4.0$ dB. As Fig.~9 shows, MD coupling (resp., MD coupling along with PP CPO) results in nearly {$1.3$} (resp., {$2.2$}) orders of magnitude BER performance improvement for MD-SC-Code-2 (resp., MD-SC-Code-3) compared to its 1D-SC counterpart, i.e., SC-Code-4, at SNR$=4.75$ dB.}

\section{Conclusion}
We expanded the repertoire of SC codes by establishing a framework for MD-SC code construction. For MD coupling, we rewire connections that are the most problematic within each SC code. Our framework encompasses a systematic way to sequentially identify and relocate problematic circulants, thus utilizing them to connect the constituent SC codes. Our MD-SC codes show a notable reduction in population of the smallest cycles and a significant improvement in BER performance compared to the 1D setting.
Two promising research directions are to investigate MD-SC codes on non-uniform channels, such as multilevel Flash and multi-dimensional magnetic recording channels, in addition to presenting low-latency decoding by incorporating code locality into decoder implementations.

\section*{Acknowledgment}
Research supported in part by a grant from ASTC-IDEMA.
\bibliographystyle{IEEEtran}
\bibliography{IEEEabrv,references}

\end{document}